\begin{document}
\title{Ciphertext malleability in Lattice-Based KEMs as a countermeasure to Side Channel Analysis}
\author{Pierre-Augustin Berthet\thanks{Supported by Agence de l'Innovation de D\'efense, grant 2022156 Th\`ese CIFRE D\'efense}\corresponding\\
Hensoldt France SAS, Plaisir, France\\
LTCI, T\'el\'ecom Paris, Palaiseau, France\\
berthet{@}telecom-paris.fr
}

\maketitle
\runninghead{P-A. Berthet
}{Ciphertext malleability in Lattice-Based KEMs as a countermeasure to SCA}

\vspace{-7ex}

\begin{abstract}
    Due to developments in quantum computing, classical asymmetric cryptography is at risk of being breached. Consequently, new Post-Quantum Cryptography (PQC) primitives using lattices are studied. Another point of scrutiny is the resistance of these new primitives to Side Channel Analysis (SCA), where an attacker can study physical leakages. In this work, we discuss an SCA vulnerability due to the ciphertext malleability of some PQC primitives exposed in a work by Ravi et al. We propose a novel countermeasure to this vulnerability exploiting the same ciphertext malleability and discuss its practical application to several PQC primitives. We also extend the seminal work of Ravi et al. by detailing their attack on the different security levels of a post-quantum Key Encapsulation Mechanism (KEM), namely FrodoKEM. We also provide a generalisation of their attack to different parameters which could be used in future similar primitives.
\end{abstract}
\begin{keywords}
Post-Quantum Cryptography, \and Lattice-Based KEM, \and FrodoKEM, \and ML-KEM, \and SABER, \and Side Channel Analysis, \and Countermeasure 
\end{keywords}

\section{Introduction}
With the emergence of quantum computing, new primitives for asymmetric cryptography are studied. The National Institute of Standards and Technology (NIST) launched a competition \cite{alagic2022status,chen2016report} to select a future Post-Quantum Cryptography (PQC) standard. The first Key Encapsulation Mechanism (KEM) to be standardised is the Module Lattice-based Key Encapsulation Mechanism (ML-KEM) or FIPS 203 \cite{nist2023mlkem}. This KEM is based on the CRYSTALS-Kyber candidate \cite{avanzi2021crystals}. However, other finalists from the third round of competition are using similar mathematical problems, such as SABER \cite{basso2020saber} and FrodoKEM \cite{naehrig2017frodokem}.


Another concern for new PQC primitives is their resistance to physical attacks. First introduced by Kocher \cite{kocher1996timing}, Side-Channel Analysis (SCA) is a kind of cryptanalysis which targets the physical environment of a cryptosystem and analyses potential leakages (e.g. execution timing, heat, power consumption, electromagnetic emissions) to try and discover correlations between sensitive data and the physical data the attacker observes.


The resistance of PQC primitives to SCA has already been studied. Several works summarise existing papers in terms of attacks and possible countermeasures, such as Ravi et al.\cite{ravi2024side} and Canto et al.\cite{canto2023algorithmic} for ML-KEM. Most noticeably, ciphertext malleability, a vulnerability of several NIST PQC lattice-based candidates, is presented by Ravi et al.\cite{ravi2021exploiting}.

\subsection{Our contributions}
In this work, we present a countermeasure to side-channel opponents targeting the decoding of the message in PQC Lattice-based KEM. The attacker uses ciphertext malleability to do so and, according to Ravi et al. \cite{ravi2021exploiting}, generic countermeasures such as masking \cite{chari1999towards,ishai2003private,prouff2013masking} or shuffling \cite{herbst2006aes,veyrat2012shuffling} are not sufficient. The countermeasure uses the same ciphertext property to perform the attack as a defence mechanism. As a consequence, the attacker cannot exploit ciphertext malleability to retrieve the secret.


However, from a high-level perspective, the countermeasure reuses existing primitive functions. Although side-channel attacks against these functions can circumvent the countermeasure, it is fully compatible with other generic countermeasures. We recommend its deployment alongside a shuffling countermeasure to have a message side-channel resistance similar to a masking and shuffling countermeasure but at lower costs. In essence, the countermeasure presented in this work ``displaces'' the leakage point from a function where generic countermeasures are not effective to one where they are.


The countermeasure initially intervenes in the \Fdec procedure of the KEM. We also discuss extending its application to the remainder of the \KDec procedure. We discuss the specificities of this extension for three NIST PQC competition candidates, namely SABER, FrodoKEM and ML-KEM.


We also extend the seminal attack work from Ravi et al. \cite{ravi2021exploiting}, originally against ML-KEM \cite{nist2023mlkem}, by providing a more detailed insight on how the attack can be performed against the different parameter settings of the post-quantum algorithm FrodoKEM. Ravi et al. \cite{ravi2021exploiting} attack works by adding biases to the ciphertext to flip specific bits of the output of the decoding of the message during decapsulation. Assuming 8-bit (resp. 16-bit) storage of the message in memory, we can recover the message of Frodo-640 in 9 (resp. 17) traces. For Frodo-976, the adaptation of Ravi et al. \cite{ravi2021exploiting} attack requires 10 (resp. 19) traces. We propose more optimal choices of biases than the adaptation of Ravi et al. \cite{ravi2021exploiting} attack by studying the impact of all possible biases instead of only focussing on the ones that flip bits. Our ciphertext choices only require 8 (resp. 14) traces for Frodo-976. For Frodo-1344, the adaptation requires 9 (resp. 17) traces, and our improved choice of biases only 7 (resp. 13).


The optimal choices of biases in our improvement of Ravi et al.\cite{ravi2021exploiting} attack were done heuristically by brute-forcing all possibilities. A Jupyter Notebook with our methods is available at the following link:
\begin{center}
    \url{https://github.com/Pierre-Augustin-Berthet/decode-lpr}
\end{center}
The paper is structured as follows. In Section \ref{sec:prel} we introduce notations and detail the LPR framework and the different primitives that use this framework in the third round of the NIST PQC competition. Ciphertext malleability is presented in Section \ref{sec:attack} with the seminal work of Ravi et al. \cite{ravi2021exploiting} and our further extension of their attack to the FrodoKEM algorithm. A generalisation of the attack is also discussed. The countermeasure is discussed throughout Section \ref{sec:countermeasure}, with descriptions of its overall strategy, extension to the entirety of \KDec, scalability, compatibility with other generic countermeasures, as well as a discussion on its impact against other SCAs. Section \ref{sec:conclusion} concludes the paper.

\section{Preliminaries}\label{sec:prel}
\subsection{Notations}\label{sec:notation}
We denote by $q\in\mathbb{N}$ a modulus and $\mathbb{Z}_q$ the set $\mathbb{Z}/q\mathbb{Z}$. $n\in\mathbb{N}$ a message length, and $R_q$ the devolution ring $\mathbb{Z}_q[X]/<X^n+1>$. We denote by $\mathcal{B}$ the set of eight bits (or byte) $\{0,1\}^{8}$ or $\mathbb{F}_2^8$. We denote by $\mathcal{U}(\mathcal{K})$ the uniform distribution over a set $\mathcal{K}$.

\begin{remark}
    In this paper, we discuss three different primitives, each with their specific and conflicting notation. Their notation will be used for their description; however, we unify the notation for the other sections of the paper.
\end{remark}


The Hamming Weight of a binary message $m$ is the number of non-zero bits it contains. We denote it $HW(m)$. We denote the difference between the Hamming Weight of a message $m$ and of a biased message $m'$ by $\Delta=HW(m')-HW(m)$. 


We introduce the concept of an X-classes distinguisher. Let $\mathcal{I}$ be an input set for a function $F$ and $\mathcal{S}$ the output set. We have a X-classes distinguisher if we can build X distinct classes of subsets of $\mathcal{I}$ from subsets of $\mathcal{S}$ such as $F(\mathcal{I})=\mathcal{S}$. Consequently, a class is denoted by $S\in\mathcal{S}\rightarrow I\in\mathcal{I}$, that is, the observation of $S$ implies that the input of $F$ has the shape $I$. In this paper, $\mathcal{I}$ contains binary words composed of the letters $0$ and $1$. The letter $\centerdot$ is used if the information on a letter of the binary word is not certain. On the other hand, elements of $\mathcal{S}$ are composed of a sign ``$+$'' or ``$-$'' followed by a positive integer. For example, if we consider \begin{equation}
    F:\mathcal{I} \rightarrow \mathcal{S},F(00)=+1,F(01)=+0,F(10)=+1,F(11)=-1,
\end{equation}
we have a two-classes as well as a three-classes distinguishers. The two-classes distinguisher is $\{+1\rightarrow \centerdot0 ,(+0,-1)\rightarrow \centerdot1 \}$. The three-classes one is $\{+1\rightarrow \centerdot0 ,+0\rightarrow 01 ,-1\rightarrow 11 \}$. 

\subsection{Generic SCA countermeasures}
To avoid correlations between side channel leakage and sensitive variables, several generic countermeasures have been proposed. They tend to add randomness in one way or another to the computations.


For example, \emph{shuffling} \cite{herbst2006aes,veyrat2012shuffling} randomises the execution order of the algorithm whenever possible. Although this does not eliminate leakage, leakage detection often requires several attempts and data alignment between those attempts to be successful. Shuffling foils the alignment and thus increases the number of attempts required to successfully exploit the leakage.


The core idea behind \emph{blinding} \cite{mamiya2004efficient,saarinen2018arithmetic} is to add to the computations a random value that will be eliminated in the later stages of the function. As a consequence, it randomises the side-channel leakage of the function for each of its iterations. Blinding methods tend to be specific to the functions they protect and are less generic.


Finally, \emph{masking} \cite{chari1999towards,ishai2003private,prouff2013masking} aims to divide the sensitive data into several random \emph{shares}. Those shares are processed by the same algorithm separately and reassembled in the later stages of the protected function to ensure the correctness. This forces a side-channel attacker to use more probing hardware as it must recover each share to breach the implementation. It can be considered the most studied and popular generic countermeasure as it benefits from formal proofs and models. A simple example of masking is the \emph{boolean masking}, which masks bits:
\begin{equation}
    \text{For }x\in\{0,1\}\text{ and }r\leftarrow\mathcal{U}(\{0,1\}), Mask(x) = (x \oplus r,r), Unmask((x\oplus r,r))=(x\oplus r)\oplus r
\end{equation}

\begin{remark}
    The generic countermeasures presented here all intervene at an algorithmic level. Other generic countermeasures can be set up at lower levels, including directly at a hardware level with \emph{shielding} \cite{plos2008enhancing}, for example. Shielding is performed by putting a physical barrier between the opponent and the targeted hardware, avoiding any leakage.
\end{remark}

\subsection{LWE/LWR PKE using lattices}
Several NIST PQC candidates rely on the Learning With Error (LWE) problem on lattices and its variant the Learning With Rounding (LWR). They are based on the Luybashevsky-Peikert-Regev (LPR) framework \cite{10.1007/978-3-642-13190-5_1}, described in Algorithm \ref{alg:lpr}. The error distribution on a set $\mathcal{K}$ for LPR is indicated by $\chi(\mathcal{K})$. It can be deterministically computed from a seed $r$, in this case it will be denoted by $\chi(r,\mathcal{K})$.

\begin{algorithm}[h!]
    \caption{LPR Encryption Scheme \cite{ravi2021exploiting,10.1007/978-3-642-13190-5_1}}
    \label{alg:lpr}
    \DontPrintSemicolon
    \Fn{\Fkg{}}{
        $\textbf{a}\leftarrow \mathcal{U}(\Z_q)$\;
        $\textbf{s},\textbf{e} \leftarrow \chi(\Z_q)$\;
        $\textbf{t} = \textbf{a} \times \textbf{s} + \textbf{e}$\;
        \KwRet{pk $= (\textbf{a},\textbf{t})$, sk $= (\textbf{s})$}\;
    }
    \hrulefill\;
    \Fn{\Fenc{$pk$,$m\in \mathcal{B}^{32}$,$r\in\mathcal{B}^{32}$}}{
        $\textbf{s}',\textbf{e}',\textbf{e}''\leftarrow\chi(r,\Z_q)$\;
        $\textbf{u} = \textbf{a} \times \textbf{s}' + \textbf{e}'$\;
        $\textbf{v}' = \textbf{t} \times \textbf{s}' + \textbf{e}''$\;
        $\textbf{x} = \Enco(m,d)$\;
        $\textbf{v} = \textbf{v}' + \textbf{x}$\;
        \KwRet{ct $=(\textbf{u},\textbf{v})$}\;
    }
    \hrulefill\;
    \Fn{\Fdec{$ct$,$sk$}}{
        $\textbf{x}' = (\textbf{v} - \textbf{u}\times\textbf{s})$\;
        $m'= \Deco(\textbf{x}',d)$\;
        \KwRet{$m'$}\;
    }
\end{algorithm}

The \Enco function is a compression function. It corresponds for a parameter $d$ to a mapping of elements of $\Z_{2^d}$ to $\Z_q$. The function \Deco performs a mapping of elements of $\Z_q$ to $\Z_{2^d}$. They use the rounding to the nearest integer denoted by $\nint{\cdot}$. When applied to a matrix or a vector or a polynomial, these functions are applied separately on each coefficient of the matrix, vector, or polynomial. This rule applies to matrix and vector of polynomials, these functions are applied separately on the coefficients of the polynomials. They are defined as follows:
\begin{eqnarray}
    \label{eq:decomp}
    \forall\beta\in\Z_{2^d},\ \Enco(\beta,d) =& \!\!\!\nint*{ \frac{q}{2^{d}}\cdot \beta} \ mod\ q, \\
	\label{eq:comp}
	\forall\alpha\in\mathbb{Z}_q,\ \Deco(\alpha,d) =& \nint*{\frac{2^{d}}{q} \cdot \alpha} \ mod\ 2^{d}.
\end{eqnarray}
We can also use a sector representation to illustrate the effect of the \Deco function as seen in Figure \ref{fig:deco}. The possible inputs are represented as positions on the circle boundary, with values ranging from $0$ to $q\equiv0$ counterclockwise. As the output depends on which interval the input is part of, the interval is drawn as a sector of the circle, and the sector is labelled with its corresponding output. For example, in Figure \ref{subfig:quad2}, if the input of \Deco is between $\frac{q}{8}$ and $\frac{3q}{8}$, it will be mapped to $01$.

\begin{figure}[h!]
    \centering
    \begin{subfigure}[c]{0.32\textwidth}
        \centering
        \includegraphics[width=\textwidth]{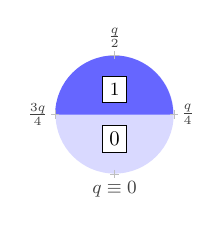}
        \subcaption{$d=1$}
        \label{subfig:quad1}
    \end{subfigure}
    \begin{subfigure}[c]{0.32\textwidth}
        \centering
        \includegraphics[width=\textwidth]{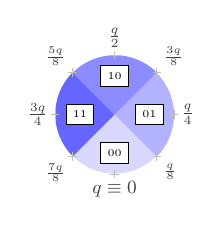}
        \subcaption{$d=2$}
        \label{subfig:quad2}
    \end{subfigure}
    \begin{subfigure}[c]{0.32\textwidth}
        \centering
        \includegraphics[width=\textwidth]{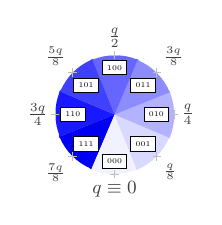}
        \subcaption{$d=3$}
        \label{subfig:quad3}
    \end{subfigure}
    \caption{Sector representation of the decoding function for different parameters $d$, with the intervals on the circle boundary counter-clockwise and the sectors labeled with the corresponding outputs to the intervals}
    \label{fig:deco}
\end{figure}

\subsection{Fujisaki-Okamoto Transform}
The LPR framework is not IND-CCA secure (INDistinguishable-Chosen Ciphertext Attack) as an attacker can modify $\textbf{u}$ and $\textbf{v}$ to recover information on $\textbf{s}$ from $m'$. To counteract this attack, LPR-based primitives use an adaptation of the Fujisaki-Okamoto Transform (FOT) \cite{fujisaki1999secure} as described in Algorithm \ref{alg:fot}.

\medskip

\begin{algorithm}[H]
    \caption{Fujisaki-Okamoto Transform for LPR \cite{ravi2021exploiting,10.1007/978-3-642-13190-5_1}}
    \label{alg:fot}
    \small
    \DontPrintSemicolon
    \Fn{\KKG{}}{
        $(pk,\textbf{s}) =$ \Fkg{}\;
        $z\leftarrow\mathcal{U}(\mathcal{B}^{32})$\;
        \KwRet{pk, sk $=(\textbf{s},z)$}\;
    }
    \hrulefill\;
    \Fn{\KEnc{$pk$}}{
        $m\leftarrow\mathcal{U}(\mathcal{B}^{32})$\;
        $r = \PRF{m,pk}$\tcp*{Pseudorandom Function}
        $ct = \Fenc{pk,m,r}$\;
        $K = \KDF{r,ct}$\tcp*{Key Derivation Function}
        \KwRet{ct,K}\;
    }
    \hrulefill\;
    \Fn{\KDec{$ct$,$pk$,$sk$}}{
        $m' = \Fdec{sk,ct}$\;
        $r' = \PRF{m',pk}$\;
        $ct' = \Fenc{pk,m',r'}$\;
        \If{ct==ct'}{
            \KwRet{K=\KDF{$r',ct'$}}\;
        }
        \Else{
            \KwRet{K=\KDF{$z,ct'$}}\;
        }
    }
\end{algorithm}
\subsection{FrodoKEM}
FrodoKEM \cite{naehrig2017frodokem} is a NIST PQC third-round alternate finalist. Although it was not selected for the fourth round, it is selected by the German Bundesamt für Sicherheit in der Informationtechnick (BSI) and the French Agence Nationale de la Sécurité des Systèmes d'Information (ANSSI) as a more conservative option than ML-KEM. It was noticeably used in the first French diplomatic telegram using post-quantum cryptography on 30 November 2022, and is currently submitted to the International Organisation for Standardisation (ISO). An Internet-Draft (I-D) has also been submitted to the Internet Engineering Task Force (IETF). Thus, it is still relevant to study its resistance to SCAs.

\subsubsection{Parameters}
\begin{remark}
    The notations used in the description of each primitive in this paper are the ones used in their respective specification papers. However, outside of their description, we use the unified notation defined in Section \ref{sec:notation}.
\end{remark}
FrodoKEM relies on the LWE problem and the LPR framework. It uses matrices over $\Z_q^{\bar{m}\times n}$, $\Z_q^{\bar{m}\times \bar{n}}$ and $\Z_q^{\bar{n}\times n}$.
FrodoKEM offers three levels of security and different sets of parameters for each level:
\begin{table}[H]
    \centering
	\caption{Parameters sets for FrodoKEM}\label{tab:frodo_params}
	\begin{tabular}{|l|c|c|c|c|c|}
		\hline
		& NIST security level & $n$ & $q$ & $\bar{m}\times\bar{n}$ & $B$ \\
		\hline
		Frodo-640 & I & 640 & $2^{15}$ & $8\times8$ & 2\\
		Frodo-976 & III & 976 & $2^{16}$ & $8\times8$ & 3\\
		Frodo-1344 & V & 1344 & $2^{16}$ & $8\times8$ & 4\\
		\hline
	\end{tabular}
\end{table}

The most interesting parameter for this work is the compression factor $B$ applied to the message in FrodoKEM. It is equivalent to the parameter $d$ from Equations \ref{eq:decomp} and \ref{eq:comp}. The compression functions are defined as $Frodo.Encode$ for \Enco and $Frodo.Decode$ for \Deco.

\begin{remark}
    In FrodoKEM, the parameter $n$ does not denote the message length but rather a matrix size parameter. The message length is given by $l=B\times\bar{m}\times\bar{n}$.
\end{remark}
\subsection{ML-KEM}
ML-KEM \cite{nist2023mlkem} is a slight modification of CRYSTALS-Kyber \cite{avanzi2021crystals}, a KEM selected by NIST \cite{alagic2022status}. ML-KEM is the first post-quantum KEM standard published by the NIST.

\subsubsection{Parameters}
Its design adapted from LPR relies on several instances of the Module-LWE (M-LWE) and Module-LWR (M-LWR) problems. It uses vectors in $R_q^k$ and square matrices in $R_q^{k\times k}$. ML-KEM offers three levels of security and different sets of parameters for each level:
\begin{table}[H]
    \centering
	\caption{Parameters sets for ML-KEM}\label{tab:kyb_params}
	\begin{tabular}{|l|c|c|c|c|c|c|c|c|c|}
		\hline
		&  NIST security level  & $n$ & $q$ &  $k$  & $\eta_1$ & $\eta_2$  & $d_u$ & $d_v$ & $d_m$ \\
		\hline
		ML-KEM-512 &  I &  256 & 3329 &  2 &  3 & 2  & 10 & 4 & 1\\
		ML-KEM-768 &  III  & 256 & 3329  & 3  & 2 & 2  & 10 & 4 & 1\\
		ML-KEM-1024 & V & 256 & 3329 &  4 & 2 & 2  & 11 & 5 & 1\\
		\hline
	\end{tabular}
\end{table}

The compression factor applied to the message in ML-KEM is denoted $d_m$ in Table \ref{tab:kyb_params} as ML-KEM does not use a specific notation for the parameter $d$ from Equations \ref{eq:comp} and \ref{eq:decomp} when applied to the message, as it is always $1$. Another interesting point is the use of compression to reduce the size of the ciphertext in ML-KEM, the relevant parameters being $d_u$ and $d_v$. The compression functions are defined as $Decompress_q$ for \Enco and $Compress_q$ for \Deco.
\subsection{SABER}
SABER \cite{basso2020saber} is a NIST PQC third round finalist. It was not selected for the fourth round nor as a standard by the NIST or the BSI. Its future deployement for real world application is uncertain. Nonetheless, its similarities with ML-KEM makes its study relevant. Especially, it relies on a power of $2$ modulus rather than a Solinas/Crandall prime modulus like ML-KEM and uses the M-LWR problem rather than M-LWE.

\subsubsection{Parameters}
SABER offers three levels of security and different sets of parameters for each level:
\begin{table}[H]
    \centering
	\caption{Parameters sets for SABER}\label{tab:sab_params}
	\begin{tabular}{|l|c|c|c|c|c|c|c|}
		\hline
		&  NIST security level  & $n$ & $q$ & $p$ & $T$ & $l$  & $\mu$ \\
		\hline
		LightSaber &  I &  256 & $2^{13}$ & $2^{10}$ & $2^3$ & 2 & 10\\
		Saber &  III  & 256 & $2^{13}$  & $2^{10}$ & $2^4$ & 3 & 8\\
		FireSaber & V & 256 & $2^{13}$ & $2^{10}$ & $2^6$ & 4 & 6\\
		\hline
	\end{tabular}
\end{table}

As SABER relies only on M-LWR, it uses several hidden compressions which are defined as functions (\Deco and \Enco) in the specification paper. The compression for the encoding of the message is tied to the parameter $p$. The parameter $T$ is used for ciphertext compression, a method also used in ML-KEM.

\section{Exploiting ciphertext malleability: existing work and generalisation}
\label{sec:attack}
\subsection{Seminal work: Targeted message bit flip for d=1}
Ravi et al.\cite{ravi2021exploiting} introduced an attack exploiting what they called ``ciphertext malleability'' in the LPR framework. They perform a chosen-ciphertext attack with the aim of flipping a bit of the secret message. Given a side-channel Oracle that we will denote by $\mathcal{O}_{SCA}$, they can recover the exact value of any message bit. 


The oracle is defined as follows: When queried, $\mathcal{O}_{SCA}$ provides the Hamming Weight of a register where an output of \Fdec is stored. This can be performed in side-channel against any implementation on a microcontroller which stores the message computed by \Fdec within a group of registers in memory before further computations. Therefore, the attack can be easily performed in a real-world setting. In their work, Ravi et al. \cite{ravi2021exploiting} consider a perfect Hamming Weight distinguisher as $\mathcal{O}_{SCA}$. If such a distinguisher is not available, they recommend using an imperfect one several times and to perform a majority vote.


The main metric used in side-channel attacks is the number of traces, i.e. measurements, required to recover the targeted secret. This number depends on many parameters, e.g. the Signal-to-Noise Ratio (SNR) of the measures or the size of the registers in which the secret is stored. For example, the seminal attack by Ravi et al. \cite{ravi2021exploiting} requires $256$ calls to the oracle against MLKEM \cite{nist2023mlkem}, without the calls to recover the original message Hamming Weight. Taking into account the storage of the message in 8-bit words and an optimal SNR, they parallelise the calls to the oracle and reduce the attack cost to only $9$ traces. In this work, we consider the performance of the attack using both traces and number of queries to the oracle, as some countermeasures like shuffling require to recover the Hamming Weight of the entire message for the attack to still work, and thus remove the possibility of using parallel attacks. The Hamming Weight of the entire message can be recovered by adding the Hamming Weights of its registers.



The ciphertext malleability is due to the function \Deco. When its input is shifted by adding a specific bias modulus $q$, this can alter the output in a predictable way. Using $\mathcal{O}_{SCA}$ before and after the chosen ciphertext attack, it is possible to deduce the exact value of any bit of the message. The biases have the form $+\frac{qk}{2^d}$, where $k<2^d$ an integer and $d$ the compression factor of the function \Deco. In the case of MLKEM \cite{nist2023mlkem}, as $d=1$ and $q$ is not a multiple of $2^d$, the bias injected is rounded, i.e. $+\nint*{\frac{q}{2}}=+\frac{q+1}{2}=1665$.


The attack by Ravi et al. \cite{ravi2021exploiting} against MLKEM is carried out as follows:
\begin{enumerate}
    \item The Hamming Weight of the original message $m$ is recovered with $\mathcal{O}_{SCA}$.
    \item The bias $+\nint*{\frac{q}{2}}$ is injected into the ciphertext to target a specific bit of the message. To target the $i^{th}$ bit of the message, we add the bias to the ciphertext \textbf{v} during the \Fdec procedure at the start of \KDec in order to force the computation of $\Deco(\textbf{v}+\nint*{\frac{q}{2}}X^i-\textbf{u}\times\textbf{s})$.
    \item The target device performs its computations and a call to $\mathcal{O}_{SCA}$ retrieves the Hamming Weight of the biased message $m'$.
    \item The results are compared: $\Delta = HW(m') - HW(m)=\begin{cases}
        +1\text{ if the message bit is }0;\\
        -1\text{ if the message bit is }1.
    \end{cases}$
\end{enumerate}

\begin{remark}
    For a parallelised attack, the oracle $\mathcal{O}_{SCA}$ only needs to recover the Hamming Weight of the register which contains the biased part of the message. Hence, the attack can be performed on each register separately in parallel to lower the number of traces.
\end{remark}

The representation in sectors allows for a better understanding of how ciphertext malleability can be exploited. Adding a specific bias to the input of \Deco ``rotates'' the sectors and alters the output accordingly. An example is given in Figure \ref{fig:compress}, where one can bit-flip the output of the \Deco function with parameter $d=1$ by adding $\frac{q}{2}$ to the input.

\begin{figure}[H]
    \centering
    \begin{subfigure}[c]{0.4\textwidth}
        \centering
        \includegraphics[height=4cm]{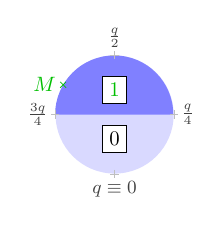}
        \subcaption{$\Deco(M,1)=1$}
        \label{subfig:compressprebf}
    \end{subfigure}
    \begin{subfigure}[c]{0.4\textwidth}
        \centering
        \includegraphics[height=4cm]{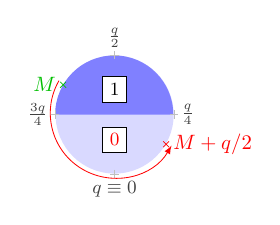}
        \subcaption{$\Deco(M+\frac{q}{2},1)=\bar{1}=0$}
        \label{subfig:compresspostbf}
    \end{subfigure}
    \caption{Bitflipping the output of the decoding for parameter $d=1$}
    \label{fig:compress}
\end{figure}

\subsection{Generalising targeted message bit flip}\label{subsec:gene}
Ravi et al. \cite{ravi2021exploiting} focused on the ML-KEM primitive, where $d=1$. In this section, we extend their targeted bit flip strategy to any value of the parameter $d$. A higher value of $d$ also implies that a single call to \Deco($\alpha,d$) outputs several bits of message at a time. 


First, we observe that, for any integer $k<2^d$, \begin{align*}
    \Deco(\alpha + \frac{qk}{2^d},d) &\equiv \nint*{\frac{2^d}{q}(\alpha+\frac{qk}{2^d})} \ mod\ 2^d\\
    &\equiv \nint*{\frac{2^d}{q}\alpha+k}\ mod\ 2^d\\
    &\equiv  \nint*{\frac{2^d}{q}\alpha}+ k\ mod\ 2^d\\
    &\equiv \Deco(\alpha,d)+k\ mod\ 2^d
\end{align*}

When $d=1$, this results in XORing the single bit of the output of \Deco($\alpha,1$) with $1$. For higher values of $d$, $k$ is added to the output. When $k$ is a power of $2$, we have the following theorem:

\begin{theorem}
    For any value of $d$, there exists a two-classes distinguisher that gives the value of at least one bit of the output of \Deco($\alpha,d$) with absolute certainty, and this for each bit of the output.
\end{theorem}

\begin{proof}
    Let $k=2^j$ be the power of $2$ for an integer $j$. Let $m=\Deco(\alpha,d)$ and $m'=\Deco(\alpha,d)+k\ mod\ 2^d$. Let $\Delta = HW(m')-HW(m)$. We use the following binary description: $m=\sum_{i=0}^{d-1}m_i*2^i$. Adding $k$ to $m$ has the following results:
    \begin{equation}\label{eq:carryprop}\begin{cases}
        \text{if }m_j=0,\ \text{then }m'_j=1\text{ and }\Delta=+1;\\
        \text{if }m_j=1,\ \text{then }m'_j=0\text{ but we have a carry to propagate.}
    \end{cases}\end{equation}
    If there is a carry to propagate, we have $\Delta = -1 + \epsilon$, with $\epsilon$ the impact of carry propagation on the Hamming Weight. The propagation of the carry can be seen as adding $k' =2k$ to a variant of $m$, denoted $m''$, such as $\forall i\neq j,m_i=m''_i$ and $m''_j=0\neq m_j$. Thus, reapplying the results from Equation \ref{eq:carryprop}, we have $\epsilon=+1$ and $\Delta=+0$, or $\epsilon = -1 + \epsilon'$, with $\epsilon'$ the impact of the propagation of the next carry. Recursively, we have $\epsilon \leq +1$ and $\Delta\leq+0$. We can thus always build the following two-classes distinguisher: \begin{equation}\label{eq:ravi}
        \text{For the bias }+\nint*{\frac{qk}{2^d}},\text{ with }k=2^j,j\in\mathbb{N},\{\Delta=+1\rightarrow m_j=0,\Delta\leq+0\rightarrow m_j=1\}.
    \end{equation}
\end{proof}

A consequence of Equation \ref{eq:ravi} is that the efficiency of the Ravi et al.\cite{ravi2021exploiting} targeted bit flip attack is 1 bit per query for any value of $d$. Equation \ref{eq:ravi} guarantees that there is no uncertainty about the value of the bit $m_j$. Hence, the targeted bit flip attack requires $d$ queries per instance of the function $\Deco(\alpha,d)$ in the attacked primitive in addition to the queries required to recover the original Hamming Weight of the message.


In a real-world scenario, the message is stored in memory and the queries can be parallelised. As a consequence, Ravi et al. \cite{ravi2021exploiting} attack can recover the entire message in $9$ traces if it is stored in 8-bit registers, $17$ traces if it is stored in 16-bit registers, only if $d$ is a power of 2 lesser or equal to the size of the register. Otherwise, the outputs of the function \Deco might overlap between two registers, and thus the attack cannot be completely parallelised. For example, this is the case in the second security level of FrodoKEM \cite{naehrig2017frodokem}, where $d=3$.

\subsection{The FrodoKEM example}
In their work, Ravi et al.\cite{ravi2021exploiting} detailed their attack for ML-KEM \cite{nist2023mlkem} and gave indications on how to extend it to SABER \cite{basso2020saber} and Frodo-640 \cite{naehrig2017frodokem} in their appendix. In this section, we focus on FrodoKEM \cite{naehrig2017frodokem}. We extend the work of Ravi et al.\cite{ravi2021exploiting} to cover the higher security parameters of FrodoKEM \cite{naehrig2017frodokem} as some peculiarities remain regarding how to attack them.

\begin{remark}
    As the modulus in FrodoKEM \cite{naehrig2017frodokem} is a power of two, we omit the rounding $\nint{\cdot}$ in the biases.
\end{remark}

\subsubsection{Existing targets}
As the attack is based on the Hamming Weight, the preferred target is microcontroller implementation, especially on the Cortex M4 family of microcontrollers. In the case of FrodoKEM \cite{naehrig2017frodokem}, we can cite as a reference the \emph{pqm4} \cite{pqm} Github repository, under the Tag ``Round 3''. However, this implementation only contains the first security level of FrodoKEM \cite{naehrig2017frodokem}, i.e., Frodo-640. The second level of security of FrodoKEM \cite{naehrig2017frodokem} was first implemented by Howe et al. \cite{Howe_Oder_Krausz_Guneysu_2018} with improved memory allocation. Further improvements were proposed by Bos et al. \cite{EPRINT:BFMOS18b}, using Single Instruction Multiple Data (SIMD) to parallelise some calculations. The first implementation with the third security level, that is, Frodo-1344, is found in a work by Bos et al. \cite{Bos_Bronchain_Custers_Renes_Verbakel_van_Vredendaal_2023} with a design centred on memory stack optimisation. Although this allows for the implementation of this third security level of FrodoKEM \cite{naehrig2017frodokem}, it comes with a cost in time performance.


Regarding the size of the registers used to store the output of the message decoding, the only publicly available implementation, \emph{pqm4} \cite{pqm}, uses 16-bit registers.

\subsubsection{How to perform the attack}
We follow the same procedure as the Ravi et al. \cite{ravi2021exploiting} targeted bit flip attack. In FrodoKEM \cite{naehrig2017frodokem}, the input of the function $\Deco$ in \Fdec at the beginning of \KDec is a matrix in $\mathbb{Z}_q^{\bar{m}\times\bar{n}}$. The decoding is performed on each coefficient of the matrix separately. To target a specific area of the message, we modify the ciphertext\footnote{With FrodoKEM \cite{naehrig2017frodokem} notation: $C$} $\textbf{v}$, which is itself a matrix in $\mathbb{Z}_q^{\bar{m}\times\bar{n}}$, by adding bias at the correct coefficient. We then compute \KDec with the biased ciphertext. The attack of Ravi et al. \cite{ravi2021exploiting} against FrodoKEM \cite{naehrig2017frodokem} is carried out as follows for each coefficient of \textbf{v}:\begin{enumerate}
    \item The Hamming Weight of the unbiased message is recovered with $\mathcal{O}_{SCA}$,
    \item The bias is injected within the ciphertext $\textbf{v}$ at the correct coefficient,
    \item The Hamming Weight of the biased message is recovered with a call to $\mathcal{O}_{SCA}$,
    \item Steps 2 and 3 are repeated until enough information is recovered to determine the value of the message bits linked to the targeted coefficient of $\textbf{v}$.
\end{enumerate}

\begin{remark}
    From an implementation point of view, the elements of $\mathbb{Z}_q$ are encoded in $16$ bits. This implies that the most significant bits of these coefficients are always $0$ since the modulus of Frodo-640 is lower than $2^{16}$. However, the ciphertexts are packed in bytes, where these extra zeros are removed. This implies that injecting bias into the ciphertext $\textbf{v}$ can impact up to $3$ bytes of the packed ciphertext. Unpacking the ciphertext does not alienate the value of the ciphertext coefficient, and thus the value of a biased ciphertext coefficient as well.

    For Frodo-976 and Frodo-1344, the packing and unpacking is simply a concatenation of coefficients of the matrix. The bias injection affects 2 bytes of the packed ciphertext.
\end{remark}

\subsubsection{Frodo-640} The first FrodoKEM security level uses the function \Deco with the parameter $d=2$, denoted $B$ in the FrodoKEM specification \cite{naehrig2017frodokem}. This implies mapping to two bits, and hence $2^d=4$ possible values. Table \ref{tab:frodo_bf_2} summarises the impact on the Hamming Weight of the output of \Deco when adding $\frac{q}{4}$, $\frac{q}{2}$ or $\frac{3q}{4}$ to the input compared to the Hamming Weight of the output without any input bias.

\begin{table}[H]
    \centering
    \caption{Impact of biasing the decoding function input with parameter $d=2$ on the Hamming Weight of the output}
    \label{tab:frodo_bf_2}
    \begin{tabular}{|c|c||c|c|c|}
        \hline
        Initial input & Initial mapping & $+\frac{q}{4}$ & $+\frac{q}{2}$ & $+\frac{3q}{4}$\\
        \hline
        $\llbracket -\frac{q}{8}, \frac{q}{8} \rrbracket$ & $00$ & $+1$ & $+1$ & $+2$ \\
        \hline
        $\llbracket \frac{q}{8}, \frac{3q}{8} \rrbracket$ & $01$ & $+0$ & $+1$ & $-1$ \\
        \hline
        $\llbracket \frac{3q}{8}, \frac{5q}{8} \rrbracket$ & $10$ & $+1$ & $-1$ & $+0$ \\
        \hline
        $\llbracket \frac{5q}{8}, \frac{7q}{8} \rrbracket$ & $11$ & $-2$ & $-1$ & $-1$ \\
        \hline
    \end{tabular}
\end{table}

We have the following distinguishers for each bias used in Table \ref{tab:frodo_bf_2}:\begin{itemize}
    \item $+\frac{q}{4}$: \begin{itemize}
        \item two-classes: $\{ +1 \rightarrow \centerdot0 ,(+0,-2)\rightarrow \centerdot1 \}$
        \item three-classes: $\{ +1 \rightarrow \centerdot0 , +0 \rightarrow 01 , -2 \rightarrow 11 \}$
    \end{itemize}
    \item $+\frac{q}{2}$: two-classes: $\{ +1 \rightarrow 0\centerdot , -1 \rightarrow 1\centerdot \}$
    \item $+\frac{3q}{4}$: \begin{itemize}
        \item two-classes: $\{ -1 \rightarrow \centerdot1 ,(+0,+2)\rightarrow \centerdot0 \}$
        \item three-classes: $\{ -1 \rightarrow \centerdot1 , +0 \rightarrow 10 , +2 \rightarrow 00 \}$
    \end{itemize}
\end{itemize}

Ravi et al. \cite{ravi2021exploiting} targeted bit flip attack can be performed in two queries with biases $+\frac{q}{2}$ and $+\frac{q}{4}$. Given the lists of two-classes distinguishers and Table \ref{tab:frodo_bf_2}, the two-classes distinguishers of $+\frac{q}{4}$ and $+\frac{q}{2}$ are complementary, as one gives away the Most Significant Bit (MSB) and the other the Least Significant Bit (LSB). However, we also notice that the bias $+\frac{3q}{4}$ can also be used alongside $+\frac{q}{2}$ or $+\frac{q}{4}$: \begin{itemize}
\item The two-classes distinguisher of the bias $+\frac{3q}{4}$ gives the LSB, and thus can be paired with the bias $+\frac{q}{2}$ which gives the MSB.
\item The three-classes distinguisher of the bias $+\frac{3q}{4}$ and the three-classes distinguisher of the bias $+\frac{q}{4}$ complement each other. The uncertainty in class $\{+1\rightarrow\centerdot0\}$ for bias $+\frac{q}{4}$ is solved by the two classes $\{+0\rightarrow10\}$ and $\{+2\rightarrow00\}$ for bias $+\frac{3q}{4}$. Similarly, the uncertainty in the class $\{-1\rightarrow\centerdot1\}$ linked to $+\frac{3q}{4}$ is solved by the two classes $\{+0\rightarrow01\}$ and $\{-2\rightarrow11\}$ linked to $+\frac{q}{4}$.
\end{itemize}


Frodo-640 can be breached by using ciphertext malleability with two Chosen Ciphertext queries and their associated oracle $\mathcal{O}_{SCA}$ queries per couple of message bits. The length of the message in Frodo-640 is $128$ bits. Full message recovery is possible in $128+Q$ queries to $\mathcal{O}_{SCA}$, $Q$ being the number of calls to $\mathcal{O}_{SCA}$ to recover the Hamming Weight of the entire unbiased message. In terms of traces, for an implementation using 16-bit registers and parallelising the attacks, the complete attack costs $17$ traces (only one trace is required to recover the Hamming Weight of the entire unbiased message). For an implementation using 8-bit registers, only $9$ traces are required.

\begin{remark}\label{rem:twoclassdistinguisher}
    An attacker can decide to recover only half the message bits using any two-classes distinguisher in Frodo-640 and brute-force the remaining $64$ bits of the message as all the two-classes distinguishers guarantee the recovery of at least one bit. This method allows for full message recovery with $64+Q$ queries to the oracle (or $5$ traces if 8-bit storage, $9$ if 16-bit storage) and a brute-force complexity of $2^{64}$ in the worst case, which is feasible with current computer technology.
\end{remark}

\subsubsection{Frodo-976} The second security parameters set of FrodoKEM uses a different compression factor for message encoding with $d=3$. Figure \ref{subfig:quad3} gives a sector representation of the \Deco function for this parameter. There are $2^d=8$ possible outputs encoded in $3$ bits. Table \ref{tab:frodo_bf_3} highlights the impact of adding $\frac{q}{8}$, $\frac{q}{4}$, $\frac{q}{2}$, or $\frac{5q}{8}$ to the input of \Deco on the Hamming Weight of its outputs.

\begin{table}[H]
    \centering
    \caption{Impact of biasing the decoding function input with parameter $d=3$ on the Hamming Weight of the output}
    \label{tab:frodo_bf_3}
    \begin{tabular}{|c|c||c|c|c|c|}
        \hline
        Initial input & Initial mapping & $+\frac{q}{8}$ & $+\frac{q}{4}$ & $+\frac{q}{2}$ & $+\frac{5q}{8}$\\
        \hline
        $\llbracket -\frac{q}{16}, \frac{q}{16} \rrbracket$ & $000$ & $+1$ & $+1$  & $+1$ & $+2$ \\
        \hline
        $\llbracket \frac{q}{16}, \frac{3q}{16} \rrbracket$ & $001$ &{$+0$} & $+1$ & $+1$ & {$+1$}\\
        \hline
        $\llbracket \frac{3q}{16}, \frac{5q}{16} \rrbracket$ & $010$ & $+1$ & {$+0$} & $+1$ & $+2$\\
        \hline
        $\llbracket \frac{5q}{16}, \frac{7q}{16} \rrbracket$ & $011$ & {$-1$} & {$+0$} & $+1$&{$-2$}\\
        \hline
        $\llbracket \frac{7q}{16}, \frac{9q}{16} \rrbracket$ & $100$ & $+1$ & $+1$ & {$-1$} & $+0$\\
        \hline
        $\llbracket \frac{9q}{16}, \frac{11q}{16} \rrbracket$ & $101$ & {$+0$} & $+1$ & {$-1$} & {$-1$}\\
        \hline
        $\llbracket \frac{11q}{16}, \frac{13q}{16} \rrbracket$ & $110$ & $+1$ & {$-2$} & {$-1$}& $+0$\\
        \hline
        $\llbracket \frac{13q}{16}, \frac{15q}{16} \rrbracket$ & $111$ & {$-3$} & {$-2$} & {$-1$} & {$-2$}\\
        \hline
    \end{tabular}
\end{table}

\medskip

\noindent We have the following distinguishers for each bias used in Table \ref{tab:frodo_bf_3}:
\begin{itemize}
    \item $+\frac{q}{8}$: \begin{itemize}
        \item two-classes: $\{ +1 \rightarrow \centerdot\centerdot0 ,(-3,-1,+0)\rightarrow \centerdot\centerdot1 \}$
        \item three-classes: $\{ +1 \rightarrow \centerdot\centerdot0 ,+0\rightarrow\centerdot01,(-3,-1)\rightarrow \centerdot11 \}$
        \item four-classes: $\{ +1 \rightarrow \centerdot\centerdot0 ,+0\rightarrow\centerdot01,-3\rightarrow111,-1\rightarrow011 \}$
    \end{itemize}
    \item $+\frac{q}{4}$: \begin{itemize}
        \item two-classes: $\{+1\rightarrow\centerdot0\centerdot,(+0,-2)\rightarrow\centerdot1\centerdot\}$
        \item three-classes: $\{+1\rightarrow\centerdot0\centerdot,+0\rightarrow01\centerdot,-2\rightarrow11\centerdot\}$ 
    \end{itemize}
    \item $+\frac{q}{2}$: two-classes: $\{-1\rightarrow1\centerdot\centerdot,+1\rightarrow0\centerdot\centerdot\}$
    \item $+\frac{5q}{8}$: \begin{itemize}
        \item two-classes: $\{(-2,-1,+1)\rightarrow\centerdot\centerdot1,(+0,+2)\rightarrow\centerdot\centerdot0\}$
        \item three-classes: $\{(-2,-1,+1)\rightarrow\centerdot\centerdot1,+2\rightarrow0\centerdot0,+0\rightarrow1\centerdot0\}$,$\{(-1,+1)\rightarrow\centerdot01,-2\rightarrow\centerdot11,(+0,+2)\rightarrow\centerdot\centerdot0\}$
        \item four-classes: $\{(-1,+1)\rightarrow\centerdot01,-2\rightarrow\centerdot11,+2\rightarrow0\centerdot0,+0\rightarrow1\centerdot0\}$,$\{-1\rightarrow101,+1\rightarrow001,-2\rightarrow\centerdot11,(-2,+0)\rightarrow\centerdot\centerdot0\}$
        \item five-classes: $\{-1\rightarrow101,+1\rightarrow001,+2\rightarrow0\centerdot0,-2\rightarrow\centerdot11,+0\rightarrow1\centerdot0\}$
    \end{itemize}
\end{itemize}


As in Frodo-640, we cannot perform a successful attack in only one $\mathcal{O}_{SCA}$ query per \Deco as no column in Table \ref{tab:frodo_bf_3} contains an eight-classes distinguisher. However, contrary to the attack on Frodo-640 where any couple of distinct biases is enough to recover the message bit values, it is not the case for Frodo-976. 

For example, when considering biases $+\frac{q}{8}$ and $+\frac{q}{4}$, they do not complement each other to solve the uncertainty of the class $\{+1\rightarrow\centerdot\centerdot0\}$ for the bias $+\frac{q}{8}$ or the uncertainty of the class $\{+1\rightarrow\centerdot0\centerdot\}$ for the bias $+\frac{q}{4}$. However, a query with bias $+\frac{q}{2}$ gives the missing information. In fact, using the biases $+\frac{q}{8},+\frac{q}{4}$ and $+\frac{q}{2}$ is equivalent to the Ravi et al. \cite{ravi2021exploiting} attack as generalised in Section \ref{subsec:gene}.


The targeted bit flip attack has a recovery efficiency of 1 bit per query. However, a better choice of biases allows for a more efficient bit recovery. We heuristically determined that querying $\mathcal{O}_{SCA}$ for $+\frac{q}{4}$ and $+\frac{5q}{8}$ allows us to build an eight-classes distinguisher and consequently recover every bit of the message with only two oracle queries per 3 bits of message, as shown in Table \ref{tab:eight_class}, where the uncertainties left by the first query with bias $+\frac{q}{4}$ are solved by the query with bias $+\frac{5q}{8}$.

\begin{table}[H]
    \centering
    \caption{Building a eight-classes distinguisher from $+\frac{q}{4}$ and $+\frac{5q}{8}$}
    \label{tab:eight_class}
    \begin{tabular}{|c|c|c|}
        \hline
        $+\frac{q}{4}$ & $+\frac{5q}{8}$ & possibilities \\
        \hline
        \multirow{4}{5em}{$+1\rightarrow\centerdot0\centerdot$} & $-1\rightarrow101$ & $101$ \\
        \cline{2-3}
         & $+1\rightarrow001$ & $001$ \\
         \cline{2-3}
         & $+2\rightarrow0\centerdot0$ & $000$ \\
         \cline{2-3}
         & $+0\rightarrow1\centerdot0$ & $100$ \\
         \hline
        \multirow{2}{5em}{$+0\rightarrow01\centerdot$} & $-2\rightarrow\centerdot11$ & $011$ \\
        \cline{2-3}
         & $+2\rightarrow0\centerdot0$ & $010$ \\
         \hline
        \multirow{2}{5em}{$-2\rightarrow11\centerdot$} & $-2\rightarrow\centerdot11$& $111$ \\
        \cline{2-3}
         & $+0\rightarrow1\centerdot0$ & $110$ \\
         \hline    
    \end{tabular}
\end{table}
%
%
A complete message recovery attack can be successfully performed against Frodo-976 with only $128+Q$ queries to the $\mathcal{O}_{SCA}$ oracle.


A peculiarity of Frodo-976 is that, since $d=3$ does not divide $8$ nor $16$, the output of some calls to the \Deco function will be split between two registers. For example, if 8-bit (resp. 16-bit) registers are used to store the message, in a 24-bit (resp. 48-bit) long word, that is, 3 registers, the first register is filled with the outputs of 3 (resp. 5) \Deco functions. The same applies to the last register. However, the middle register is filled with the outputs of 4 (resp. 6) \Deco functions that overlap with the first or last register. Although attacks on non-overlapping coefficients can be parallelised, the overlapping ones require specific investigation.


In the 8-bit register case, the overlap issue is represented in Figure \ref{fig:overlap8}, where the output of each \Deco is stored in big endian. \Deco 3 and \Deco 6 affect two registers, with one in common (Register 2). \Deco 1,4 and 7 can be attacked in parallel with two traces. The same applies to \Deco 2,5 and 8. The number of traces required for \Deco 3 and 6 can be reduced when specific biases are chosen. The biases $+\frac{q}{4}$ and $+\frac{3q}{4}$ do not affect the least significant bit of the output of \Deco. As a consequence, choosing one of those two biases leads to a non-overlapping impact even if the \Deco output is split between two registers. The number of traces required to attack the output of \Deco 3 and 6 in a parallel manner is therefore $3$. The total number of traces required to recover the message in an 8-bit implementation of Frodo-976 is $8$ with our optimised bias choices. 

\begin{remark}\label{rem:trace_ravi_8_976}
    The adaptation of the Ravi et al. \cite{ravi2021exploiting} attack is also affected by the overlap issue. However, the bias $+\frac{q}{2}$ only affects the MSB and $+\frac{q}{4}$ the middle bit and the MSB. This implies that attacks against \Deco 3 and \Deco 6 can be parallelised for the two higher bits of \Deco 3. Recovering the LSB of the output of \Deco 3 cannot be parallelised with the recovery of the last bit of \Deco 6. Thus, it requires an extra trace, for a total of 10.
\end{remark}

\begin{figure}[h]
    \centering
    \includegraphics[width=\textwidth]{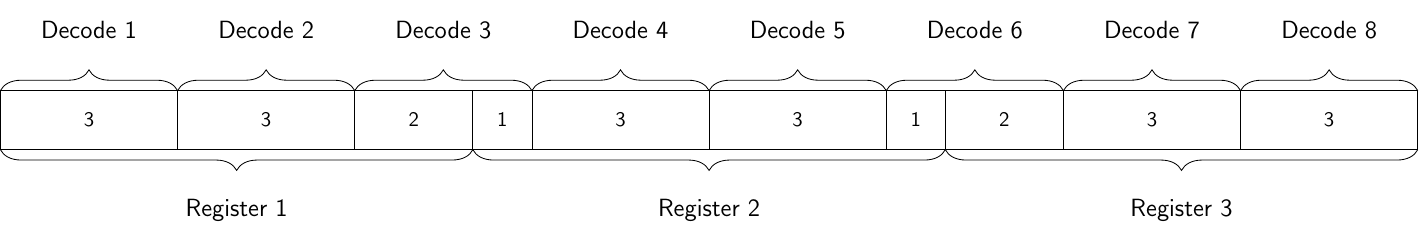}
    \caption{Overlap issue with 8 decodings and their 24-bit long output split on 3 registers}
    \label{fig:overlap8}
\end{figure}


In the 16-bit register case, the overlap issue is represented in Figure \ref{fig:overlap16}. \Deco 6 and \Deco 11 affect two registers and thus would require 4 traces. However, the middle register has only four complete \Deco compared to the five of the first and last registers. Using the observation that the biases $+\frac{q}{4}$ and $+\frac{3q}{4}$ do not affect the least significant bit of the output of \Deco, we reduce the number of required traces. We parallelise as follows: \Deco 1,7 and 12 (two traces), \Deco 2, 8 and 13 (two traces), \Deco 3, 9 and 14 (two traces), \Deco 4, 10 and 15 (two traces). We perform the two attacks on \Deco 5 and 16 in parallel with only one attack against \Deco 11 using the aforementioned biases (two traces). With the three remaining traces to sort the overlapping \Deco instances, and the original trace, the complexity of the parallelised attack against Frodo-976 implemented with 16-bit registers is 14 traces. 
\begin{remark}
    Remark \ref{rem:trace_ravi_8_976} also applies to the 16-bit register case for the adaptation of Ravi et al. \cite{ravi2021exploiting} attack. The parallelisation in this case requires not one but two extra traces, for a total of 19. 
\end{remark}

\begin{figure}[h]
    \centering
    \includegraphics[width=\textwidth]{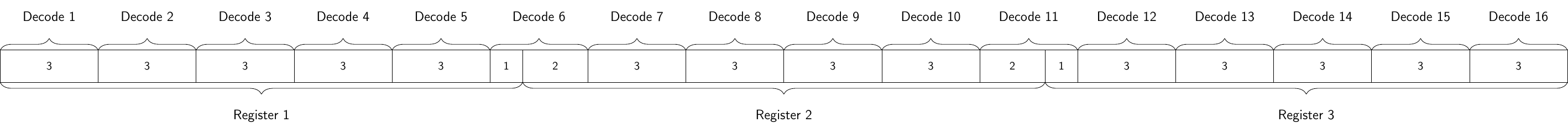}
    \caption{Overlap issue with 16 decodings and their 48-bit long output split on 3 registers}
    \label{fig:overlap16}
\end{figure}


\subsubsection{Frodo-1344} The highest security level of FrodoKEM uses $d=4$. With this parameter, the \Deco function maps to $2^d=16$ possible outputs encoded on $4$ bits. Table \ref{tab:frodo_bf_4} shows the impact of adding $+\frac{2^j}{16}$ with $j\in\llbracket 0,3 \rrbracket$ or $+\frac{5q}{8}$ to the input of \Deco.

\eject

\begin{table}[ht]
    \centering
    \caption{Impact of biasing the decoding function input with parameter $d=4$ on the Hamming Weight of the output}
    \label{tab:frodo_bf_4}
    \small
    \begin{tabular}{|c|c||c|c|c|c|c|}
        \hline
        Initial input & Initial mapping & $+\frac{q}{16}$ & $+\frac{q}{8}$ & $+\frac{q}{4}$ & $+\frac{q}{2}$ & $+\frac{5q}{8}$\\
        \hline
        $\llbracket -\frac{q}{32}, \frac{q}{32} \rrbracket$ & $0000$ & $+1$ & $+1$  & $+1$ &  $+1$ & $+2$\\
        \hline
        $\llbracket \frac{q}{32}, \frac{3q}{32} \rrbracket$ & $0001$ & $+0$ & $+1$ & $+1$ &  $+1$ & $+2$\\
        \hline
        $\llbracket \frac{3q}{32}, \frac{5q}{32} \rrbracket$ & $0010$ & $+1$ & $+0$ & $+1$ &  $+1$ & $+1$\\
        \hline
        $\llbracket \frac{5q}{32}, \frac{7q}{32} \rrbracket$ & $0011$ & $-1$ & $+0$ & $+1$ &  $+1$ & $+1$\\
        \hline
        $\llbracket \frac{7q}{32}, \frac{9q}{32} \rrbracket$ & $0100$ & $+1$ & $+1$  & $+0$ &  $+1$ & $ +2$\\
        \hline
        $\llbracket \frac{9q}{32}, \frac{11q}{32} \rrbracket$ & $0101$ & $+0$ & $+1$  & $+0$ & $+1$ & $+2$\\
        \hline
        $\llbracket \frac{11q}{32}, \frac{13q}{32} \rrbracket$ & $0110$ & $+1$ & $-1$ & $+0$ &  $+1$ & $-2$\\
        \hline
        $\llbracket \frac{13q}{32}, \frac{15q}{32} \rrbracket$ & $0111$ & $-2$ & $-1$  & $+0$ & $+1$& $-2$\\
        \hline
        $\llbracket \frac{15q}{32}, \frac{17q}{32} \rrbracket$ & $1000$ & $+1$ & $+1$ &  $+1$ & $-1$ & $+0$\\
        \hline
        $\llbracket \frac{17q}{32}, \frac{19q}{32} \rrbracket$ & $1001$ & $+0$ & $+1$ &  $+1$ &  $-1$& $ +0$\\
        \hline
        $\llbracket \frac{19q}{32}, \frac{21q}{32} \rrbracket$ & $1010$ & $+1$ & $+0$ &  $+1$ &  $-1$& $-1$\\
        \hline
        $\llbracket \frac{21q}{32}, \frac{23q}{32} \rrbracket$ & $1011$ & $-1$ & $+0$ &  $+1$ &  $-1$& $-1$\\
        \hline
        $\llbracket \frac{23q}{32}, \frac{25q}{32} \rrbracket$ & $1100$ & $+1$ & $+1$ &  $-2$ &  $-1$ & $+0$\\
        \hline
        $\llbracket \frac{25q}{32}, \frac{27q}{32} \rrbracket$ & $1101$ & $+0$ & $+1$ &  $-2$ &  $-1$ & $+0$\\
        \hline
        $\llbracket \frac{27q}{32}, \frac{29q}{32} \rrbracket$ & $1110$ & $+1$ & $-3$ &  $-2$ & $-1$ & $-2$\\
        \hline
        $\llbracket \frac{29q}{32}, \frac{31q}{32} \rrbracket$ & $1111$ & $-4$ & $-3$ &  $-2$ &  $-1$ & $-2$\\
        \hline
    \end{tabular}
\end{table}


The biases $+\frac{q}{16},\ +\frac{q}{8},\ +\frac{q}{4}$ and $+\frac{q}{2}$ are used to perform the Ravi et al. \cite{ravi2021exploiting} targeted bit flip attack. The efficiency of this attack remains at 1 bit per chosen ciphertext query. We heuristically determined that the minimum number of biases for a complete recovery of the output of $\Deco(\alpha,d)$ is 3. Combining the biases $+\frac{q}{16}$, $+\frac{q}{4}$ and $+\frac{5q}{8}$ allows us to build a sixteen-classes distinguisher and perform the attack successfully with only three Chosen Ciphertext queries to $\mathcal{O}_{SCA}$ per \Deco. Table \ref{tab:sixteen_class} describes the resolution of the uncertainties when the first bias is $+\frac{q}{4}$, the second is $+\frac{5q}{8}$ and the last\footnote{The order of the biases is for readability and has no impact on the attack} $+\frac{q}{16}$.

\begin{table}[ht]
    \centering
    \caption{Building a sixteen-classes distinguisher from $+\frac{q}{4}$, $+\frac{5q}{8}$ and $+\frac{q}{16}$}
    \label{tab:sixteen_class}
    \begin{tabular}{|c|c|c|c|}
        \hline
        $+\frac{q}{4}$ & $+\frac{5q}{8}$ & $+\frac{q}{16}$ & possibilities \\
        \hline
        \multirow{8}{5em}{$+1\rightarrow\centerdot0\centerdot\centerdot$} & \multirow{2}{5em}{$+2\rightarrow0\centerdot0\centerdot$}& $+1\rightarrow\centerdot\centerdot\centerdot0$ & $0000$ \\
        \cline{3-4}
        & & $+0\rightarrow\centerdot\centerdot01$ & $0001$ \\
         \cline{2-4}
        & \multirow{2}{5em}{$+1\rightarrow001\centerdot$} & $+1\rightarrow\centerdot\centerdot\centerdot0$ & $0010$ \\
         \cline{3-4}
         & & $-1\rightarrow\centerdot011$ & $0011$ \\
         \cline{2-4}
         & \multirow{2}{5em}{$+0\rightarrow1\centerdot0\centerdot$} & $+1\rightarrow\centerdot\centerdot\centerdot0$ & $1000$\\
         \cline{3-4}
         & & $+0\rightarrow\centerdot\centerdot01$ & $1001$\\
         \cline{2-4}
         & \multirow{2}{5em}{$-1\rightarrow101\centerdot$} & $+1\rightarrow\centerdot\centerdot\centerdot0$ & $1010$ \\
         \cline{3-4}
         & & $-1\rightarrow\centerdot011$ & $1011$\\
         \hline
        \multirow{4}{5em}{$+0\rightarrow01\centerdot\centerdot$} & \multirow{2}{5em}{$+2\rightarrow0\centerdot0\centerdot$}& $+1\rightarrow\centerdot\centerdot\centerdot0$ & $0100$ \\
        \cline{3-4}
         & &$+0\rightarrow\centerdot\centerdot01$ & $0101$ \\
         \cline{2-4}
         & \multirow{2}{5em}{$-2\rightarrow\centerdot11\centerdot$}& $+1\rightarrow\centerdot\centerdot\centerdot0$ & $0110$ \\
         \cline{3-4}
         & & $-2\rightarrow0111$ & $0111$ \\
         \hline
        \multirow{4}{5em}{$-2\rightarrow11\centerdot\centerdot$} &\multirow{2}{5em}{$+0\rightarrow1\centerdot0\centerdot$} & $+1\rightarrow\centerdot\centerdot\centerdot0$& $1100$ \\
        \cline{3-4}
        & & $+0\rightarrow\centerdot\centerdot01$ & $1101$ \\
        \cline{2-4}
        & \multirow{2}{5em}{$-2\rightarrow\centerdot11\centerdot$} & $+1\rightarrow\centerdot\centerdot\centerdot0$ & $1110$ \\
        \cline{3-4}
         & & $-4\rightarrow1111$ & $1111$ \\
         \hline    
    \end{tabular}
\end{table}


There are 95 possible choices of 3 biases to perform a complete \Deco output recovery. Frodo-1344 can be attacked for complete message recovery in only $192+Q$ Chosen Ciphertext queries with the help of the $\mathcal{O}_{SCA}$ oracle. For an implementation using 16-bit registers, thanks to the parallelisation of the attacks, only $13$ traces are required for a complete message recovery. For an implementation using 8-bit registers, only $7$ traces are required.

\subsection{Observations for higher values of $d$}
In this work, we determined heuristically the values of the optimal biases. However, we observed patterns in the choice of these biases. In this section, we discuss some of them and how they could affect message recovery for higher values of the parameter $d$. Other observations are conjectures; we leave a more thorough study for future work.

\begin{theorem}
    Any choice of biases for complete message recovery includes at least one bias $+\frac{qk}{2^d}$ with $k$ an odd number.
\end{theorem} 

\begin{proof}
    We define the binary decomposition of a bias $k$ as $k=\sum_{i=0}^{d-1}2^ik_i$. Similarly, the binary decomposition of the output of $\Deco(\alpha,d)$ is $m=\sum_{i=0}^{d-1}2^im_i$.
    \begin{equation} \label{eq:decal}
        \Deco(\alpha+\frac{qk}{2^d},d) \equiv \Deco(\alpha,d) + k \equiv \Deco(\alpha,d) + k_0 +2k_1 + \dots+2^{d-1}k_{d-1} \ mod\ 2^d.
    \end{equation}
    With the decomposition described in Equation \ref{eq:decal}, the calculation of the impact of the bias on the output of $\Deco$ can be performed by taking the impact of adding each $2^ik_i$ one after the other. Consequently, we can take each $k_i$ separately, compute a temporary $\Delta$ according to Equation \ref{eq:carryprop}, do the same with $k_{i+1}$ but taking into account the previous modification performed if $k_i=1$ to compute the next temporary $\Delta$. Given the representation in sectors, this is equivalent to performing a first rotation of $2^ik_i$ sectors, computing the temporary $\Delta$, and then performing the next rotation by $2^{i+1}k_{i+1}$ to add the results. 
    
    Let us assume that $k_0=0$ in Equation \ref{eq:decal}, i.e. $k$ is an even number. Then, the impact of the bias on the output bit $m_0$ is null as $\Deco(\alpha,d) = \Deco(\alpha,d)+0$. Regardless of the value of the next $k_i$ such as $i>0$, they do not affect the output bit $m_0$. According to Equation \ref{eq:carryprop}, the next $2^ik_i$ can affect only $m_1$ to $m_{d-1}$. As a consequence, there is always an uncertainty in the value of $m_0$ as it is never altered by the bias added to the input of $\Deco$, and therefore has never an impact on the variation of the Hamming Weight of $m$. To build a $2^d$-classes distinguisher, it is mandatory to have at least one bias $+\frac{qk}{2^d}$ with $k_0=1$.
\end{proof}

A comparison between the efficiency of the Ravi et al. \cite{ravi2021exploiting} targeted bit flip attack and a more optimal choice of biases is given in Figure \ref{fig:compar}. We heuristically determined the minimal number of biases required for a complete recovery by performing an exhaustive search for increasing values of the number of biases.

\begin{figure}[H]
    \centering
    \includegraphics[width=0.6\linewidth]{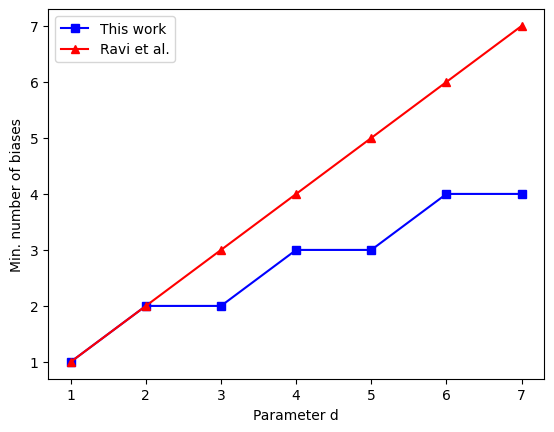}
    \caption{Comparison of the number of biases required for a complete output recovery between Ravi et al. \cite{ravi2021exploiting} and this work}
    \label{fig:compar}
\end{figure}



\section{Countermeasure}\label{sec:countermeasure}
\subsection{Overall strategy}
Ravi et al. \cite{ravi2021exploiting} claim that generic countermeasures such as shuffling or masking cannot protect against their attack. In this section, we propose reusing the attack as a defence mechanism. The rationale for the design is to apply the attack in a random but controlled manner. This will randomly flip the message bits to a point where an attacker cannot differentiate between his own attack and the countermeasure, thus counteracting the attack locally.


This countermeasure has an effect on the Hamming Weight of the message similar to the generic masking countermeasure. However, contrary to masking where each share is processed separately, in our case only one ``share'' is processed. This can reduce the cost of the implementation, as masking is known to have significant overhead due to its multiple shares to process. 

In this section, we present the countermeasure for the LPR framework and also detail its application to specific algorithms. 

\subsection{Application to the decryption}
To perform the countermeasure, we randomly choose a bias among a predefined set of biases and add it to the input of the \Deco function. This can be performed by sampling random bits and using a map between those bits and the set of biases.


We can reuse existing parts of the LPR framework to perform the two operations (sampling and mapping). Random sampling is already performed in LPR during the message sampling in the \KEnc procedure as described in Algorithm \ref{alg:fot}. For the mapping, it is performed by the \Enco function used in the \Fenc procedure. The message is then added to the received ciphertext $\textbf{v}$ in the same way as the message is added to the temporary ciphertext of $\textbf{v}$ during the \Fenc procedure.


In Algorithm \ref{alg:counterpke} a colour code is used to highlight the provenances of each part of the countermeasure. In \textcolor{bleu}{blue} are the functions reused from \KEnc, in \textcolor{vert}{green} are the ones reused from \Fenc.

\begin{algorithm}[H]
    \caption{Decryption with countermeasure}
    \label{alg:counterpke}
    \DontPrintSemicolon
    \Fn{\FdecCM{$ct$,$sk$}}{
        \textcolor{bleu}{$mask\leftarrow\mathcal{U}(\mathcal{B}^{32})$}\;
        \textcolor{vert}{$\textbf{vm} = \textbf{v} + \Enco(mask,d)$}\;
        $\textbf{xm}' = (\textcolor{vert}{\textbf{vm}} - \textbf{u}\times\textbf{s})$\;
        $mm= \Deco(\textbf{xm}',d)$\;
        \KwRet{$mm,mask$}\;
    }
\end{algorithm}

\medskip 

\noindent Due to ciphertext malleability, we have the following theorem:
\begin{theorem}
    Let $m$ be the message output of \Fdec{$ct$,$sk$}. Let $mm$ be the message output and $mask$ the mask output of \FdecCM{$ct$,$sk$}. Then, for each sub-message $m_i$, $mm_i$ and $mask_i$ of length $d$ bits, \begin{equation}mm_i=mask_i+ m_i\text{ mod }2^d.
    \end{equation} 
\end{theorem}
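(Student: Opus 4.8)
\medskip

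\noindent\textit{Proof plan.} The plan is to absorb the two extra ring operations performed by \FdecCM{} into a single additive bias on the input of \Deco{}, reducing the theorem to an identity about the compression maps, and then to dispatch that identity via the same "rotation'' observation already exploited for the attack in Section~\ref{sec:attack}. Put $\textbf{x}' = \textbf{v} - \textbf{u}\times\textbf{s}$, so that $m = \Deco{\textbf{x}'}$ by definition of \Fdec{}. Since $\textbf{vm} = \textbf{v} + \textbf{xm}$ with $\textbf{xm} = \Enco{mask}$ and since subtracting $\textbf{u}\times\textbf{s}$ is additive,
\begin{equation}\label{eq:plan-reduce}
    \textbf{xm}' \;=\; \textbf{vm} - \textbf{u}\times\textbf{s} \;=\; (\textbf{v} - \textbf{u}\times\textbf{s}) + \Enco{mask} \;=\; \textbf{x}' + \Enco{mask},
\end{equation}
hence $mm = \Deco{\textbf{x}' + \Enco{mask}}$. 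As \Enco{} and \Deco{} act independently on each $d$-bit chunk, it suffices to show that for every $\alpha\in\mathbb{Z}_q$ and every $\beta\in\mathbb{Z}_{2^d}$ one has $\Deco{\alpha + \Enco{\beta}} = \beta + \Deco{\alpha}\bmod 2^d$.

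\medskip

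\noindent I would prove this entry-wise identity by unfolding the definitions of \Enco{} and \Deco{}. Write $\Enco{\beta} = \frac{q}{2^d}\beta + \theta_\beta$ with $|\theta_\beta| \le \frac12$ the rounding defect of $\nint{\cdot}$ (and no reduction modulo $q$ occurring for $\beta < 2^d$). Then $\frac{2^d}{q}(\alpha + \Enco{\beta}) = \frac{2^d}{q}\alpha + \beta + \frac{2^d}{q}\theta_\beta$, so, absorbing the outer reduction and using that the integer $\beta$ commutes with $\nint{\cdot}$,
\begin{equation}
    \Deco{\alpha + \Enco{\beta}} \;=\; \beta + \nint{\tfrac{2^d}{q}\alpha + \tfrac{2^d}{q}\theta_\beta} \bmod 2^d .
\end{equation}
When $2^d \mid q$ — which holds for the message compression of FrodoKEM and of SABER — we have $\theta_\beta = 0$, the perturbation vanishes, and the right-hand side is exactly $\beta + \Deco{\alpha}\bmod 2^d$; this is nothing but the Section~\ref{sec:attack} fact that adding a multiple of $\frac{q}{2^d}$ to the input of \Deco{} shifts the decoded value by that same multiple. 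Combined with (\ref{eq:plan-reduce}) this settles the theorem in this case with no further hypothesis.

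\medskip

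\noindent The step I expect to be the main obstacle is the case $2^d \nmid q$, in particular ML-KEM, where $q$ is prime and $d_m = 1$. There $\frac{2^d}{q}\theta_\beta$ need not vanish, but $\bigl|\frac{2^d}{q}\theta_\beta\bigr| \le \frac{2^{d-1}}{q}$, so $\nint{\frac{2^d}{q}\alpha + \frac{2^d}{q}\theta_\beta} = \nint{\frac{2^d}{q}\alpha}$ \emph{unless} $\frac{2^d}{q}\alpha$ lies within $\frac{2^{d-1}}{q}$ of a half-integer, i.e.\ unless the entry $\alpha$ is the unique integer closest to a decoding boundary of \Deco{}. I would discharge this by restricting to ciphertexts on which \Fdec{} decodes within the decryption-correctness margin of the scheme: for such ciphertexts every entry of $\textbf{x}'$ is more than $\frac12$ away from every decoding boundary, hence never exceptional, and the entry-wise identity — and with (\ref{eq:plan-reduce}) the theorem — holds for all relevant inputs. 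For $d = 1$ one may instead skip the estimate entirely and verify the four cases $(m_i,mask_i)\in\{0,1\}^2$ directly against the quadrant picture of Figure~\ref{subfig:quad1}.
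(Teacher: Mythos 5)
Your proposal is correct and rests on the same mechanism as the paper's proof: $\Enco{mask}$ added to the input of \Deco{} rotates the decoding quadrant by $mask$ sectors, so the decoded value shifts by $mask \bmod 2^d$; your equation reducing $\textbf{xm}'$ to $\textbf{x}'+\Enco{mask}$ just makes explicit the step the paper leaves implicit. Where you genuinely go beyond the paper is the case $2^d \nmid q$. The paper's argument asserts that \Enco{} maps $\epsilon$ to $\frac{\epsilon q}{2^d}$ and that each sector covers exactly that many values, which is only true when $2^d \mid q$; for ML-KEM ($q=3329$, $d=1$, $\Enco{1}=1665$) the identity $\Deco{\alpha+\Enco{\beta}}=\beta+\Deco{\alpha}\bmod 2^d$ actually fails for the boundary inputs you isolate (e.g.\ $\alpha=832$: $\Deco{832}=0$ but $\Deco{832+1665}=\Deco{2497}=0\neq 1$), so the theorem as stated needs exactly the qualification you supply — either $2^d\mid q$, or a restriction to inputs lying within the decryption-correctness margin, away from the decoding boundaries. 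Your rounding-defect estimate $\bigl|\tfrac{2^d}{q}\theta_\beta\bigr|\le\tfrac{2^{d-1}}{q}$ and the observation that the exceptional set contains at most one integer per boundary is the right way to close this gap, and it is consistent with the paper's own later acknowledgment of a residual bias for ML-KEM in its re-encryption tables.
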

\begin{proof}
    This is a direct consequence of ciphertext malleability. Let us consider the \Enco and \Deco functions for the parameter $d$. We use the sector representation for \Deco. The function \Enco maps the bit words of value $k$ between $0$ and $2^d-1$ to the corresponding values $\frac{qk}{2^d}$. When used as biases on the input of \Deco, these values rotate the sectors, as discussed in Section \ref{subsec:gene}. As each sector covers $\frac{qk}{2^d}$ values, applying any of the aforementioned biases will rotate the sectors in a way that retains the set of the bounds of every sector. However, each sector now corresponds to its previous value $+k \text{ mod }2^d$. We then take $k=mask_i$ and reapplied this proof to the other sub-messages of $mask$.
\end{proof}

\begin{remark}
    If the parameter $d$ used for \Enco and \Deco is $1$, then $mm=m\oplus mask$ as $\forall(a,b)\in\{0,1\}, a+b\text{ mod }2 \Leftrightarrow a\oplus b $. This is the case for ML-KEM and SABER.
\end{remark}

\subsection{Extension to the remaining decapsulation}
Due to the Fujisaki-Okamoto Transform, the message is later re-encrypted during the \KDec procedure. In this section, we discuss how to extend our countermeasure to the remainder of the decapsulation.

\subsubsection{Seed generation}
The message is used to generate the random seed used in \Fenc. This is done by using a PRF on the message and a token derived from the public key. In both FrodoKEM \cite{naehrig2017frodokem}, ML-KEM \cite{nist2023mlkem}, and SABER \cite{basso2020saber}, this is performed with a call to a Keccak instance taking the message concatenated with a hash of the public key as input. This is central within the correctness of the FOT. If the message is altered, the seed generated during the decapsulation will be significantly different from the one used during the encapsulation, resulting in a different ciphertext altogether. Therefore, we need to ensure the correctness of the seed.


The application of our countermeasure results in a masked message. This masking is an arithmetic masking modulo $2^d$, or a boolean masking if $d=1$. Seed generation uses a hash function, implying computations with boolean logic. To perform it in a secure manner, if $d\neq1$ , we must first convert the masking from arithmetic to boolean (A2B conversion). Once done, this can be fed into a secure masked hash function. However, such an implementation of the hash function is costly but necessary.


\subsubsection{Re-encryption}
To further extend the countermeasure, we propose to perform re-encryption on the masked message but with the correct seed. The calculations of $\textbf{u}$ and $\textbf{v}'$ during \Fenc are therefore not altered by our countermeasure. However, we have $\textbf{v}''=\textbf{v}'+\Enco(mm)$, resulting in a different ciphertext.


After the re-encryption is performed, the generated ciphertext is compared with the received one. To ensure correctness, we have to ``correct'' $\textbf{v}''$ so it corresponds to $\textbf{v}$. This correction depends on the linearity of the \Enco function that is tied to $q$. It is therefore primitive-dependent. If $2^d$ divides $q$, then the rounding in Equation \ref{eq:decomp} is not necessary, since both $\beta$ and $\frac{q}{2^d}$ are integers. Multiplying an integer by a constant scalar is linear.


Both SABER \cite{basso2020saber} and FrodoKEM \cite{naehrig2017frodokem} have this property. In both cases, the modulo used within \Enco is a multiple of the compression factor, and consequently the rounding is not needed. This is not the case for ML-KEM \cite{nist2023mlkem}.

\vspace{-2ex}

\paragraph{Linear encoding}
 As the function \Enco is linear, to compute the correct ciphertext one can use the encoding of the mask on the generated ciphertext. 
 For FrodoKEM \cite{naehrig2017frodokem}, when performing the re-encryption during the decapsulation, instead of lines 11 and 12 in Algorithm \ref{alg:lpr}, we compute\footnote{With FrodoKEM \cite{naehrig2017frodokem} notation: $\textbf{C}' = \textbf{V} + \text{Frodo.Encode}(mm) - \text{Frodo.Encode}(mask).$} \begin{equation}
    \textbf{v} = \textbf{v'} + \Enco(mm,d) - \Enco(mask,d).
\end{equation}
For SABER \cite{basso2020saber}, the \Enco function is not defined but its effect is still used. An interesting detail is that the encoding is not added but subtracted. Thus, when performing the re-encryption during the decapsulation, instead of lines 11 and 12 in Algorithm \ref{alg:lpr}, we compute\footnote{With SABER \cite{basso2020saber} notation: $c_m = (v'+ h_1 - 2^{\epsilon_p-1}mm + 2^{\epsilon_p-1}mask\text{ mod }p) \gg (\epsilon_p-\epsilon_T).$} \begin{equation}
    \textbf{v} = \textbf{v'} - \Enco(mm,d) + \Enco(mask,d).
\end{equation}

\paragraph{Non-linear encoding}
ML-KEM \cite{nist2023mlkem} uses a Solinas/Crandall prime. Therefore, it cannot be divided by $2$ and as a result, the encoding is not linear. There are several options. We take inspiration from existing work on masked implementations of ML-KEM \cite{bos2021masking,cryptoeprint:2022/058}. 

The first idea is to ensure correctness by considering the encoding to be linear and then to further correct in specific cases. When performing the re-encryption during the decapsulation, instead of lines 11 and 12 in Algorithm \ref{alg:lpr} we compute\footnote{With ML-KEM \cite{nist2023mlkem} notation: $v' = \vec{t}^{\perp}\cdot \vec{r} + e_1 + Decompress_q(mm,1) + Decompress_q(mask,1).$}
\begin{equation}\label{eq:corr1}
    \textbf{v} = \textbf{v}' + \Enco(mm,1) + \Enco(mask,1). 
\end{equation}
However, as shown in Table \ref{tab:mlkembias}, there is a small bias when both $mask$ and $mm$ are equal to $1$. Table \ref{tab:mlkembias} uses the following legend:
\begin{itemize}
    \item \begin{tikzpicture}
        \node[draw=black,rectangle,fill=vert,text=white]{white};
    \end{tikzpicture} : The result is equal with $\Enco(m=mask\oplus mm,1)$
    \item \begin{tikzpicture}
        \node[draw=black,rectangle,fill=rouge,text=white]{white};
    \end{tikzpicture} : The result differs from $\Enco(m,1)$
\end{itemize}
\begin{table}[H]
    \caption{Comparing $\protect\Enco(mm,1)+\protect\Enco(mask,1)$ with $\protect\Enco(m,1)$ }
    \label{tab:mlkembias}
    \centering
    \large
    \begin{tabular}{|c|c|c|}
    \hline
         & $mm_{i}=0$ & $mm_{i}=1$ \\ 
    \hline
         $mask_i = 0$ & \cellcolor{vert}\textcolor{white}{\num{0}+\num{0}} & \cellcolor{vert}\textcolor{white}{\num{1665}+\num{0}}\\
    \hline
         $mask_i = 1$ & \cellcolor{vert}\textcolor{white}{\num{0}+\num{1665}} & \cellcolor{rouge}\textcolor{white}{\num{1665}+\num{1665}=\num{1}$\neq$\num{0}} \\
    \hline
    \end{tabular}
\end{table}
There are two ways to further correct this bias. The first relies on the masked implementation of Heinz et al. \cite{cryptoeprint:2022/058}. They propose to subtract the logic AND of the mask and the masked message from the computations of $\textbf{v}'$ before performing the ciphertext comparison. However, if not performed securely, this AND will reveal information on the message.


The second way is inspired by the masked implementation of Bos et al. \cite{bos2021masking}. Instead of comparing the ciphertexts by compressing the generated ciphertext and checking for a strict equality between $ct$ and $ct'$, they propose to compare the generated ciphertext directly with the decompression of the received one. However, since compression and decompression are lossy, they do not check for a strict equality but rather that the maximum distance between both ciphertexts does not exceed a specific threshold. To compensate for the bias, we suggest increasing the threshold for ciphertext $\textbf{v}$ by one.


We propose a new alternative to these methods. Instead of adding the decompression of the mask and then correct the resulting bias, we propose the following formula\footnote{With ML-KEM \cite{nist2023mlkem} notation: $v' = \vec{t}^{\perp}\cdot \vec{r} + e_1 + Decompress_q(mm,1) + (-1)^{mm}Decompress_q(mask,1). $}:\begin{equation}\label{eq:corr2}
    \textbf{v} = \textbf{v'} + \Enco(mm,1) + (-1)^{mm}\Enco(mask,1).
\end{equation}
This results in Table \ref{tab:mlkembiasnew}, with the same legend as Table \ref{tab:mlkembias}.
\begin{table}[H]
    \caption{Comparison between $\protect\Enco(mm,1) + (-1)^{mm}\protect\Enco(mask,1)$ and $\protect\Enco(m,1)$}
    \label{tab:mlkembiasnew}
    \centering
    \large
    \begin{tabular}{|c|c|c|}
    \hline
         & $mm_{i}=0$ & $mm_{i}=1$ \\ 
    \hline
         $mask_i = 0$ & \cellcolor{vert}\textcolor{white}{\num{0}+\num{0}} & \cellcolor{vert}\textcolor{white}{\num{1665}-\num{0}}\\
    \hline
         $mask_i = 1$ & \cellcolor{vert}\textcolor{white}{\num{0}+\num{1665}} & \cellcolor{vert}\textcolor{white}{\num{1665}-\num{1665}=\num{0}} \\
    \hline
    \end{tabular}
\end{table}
\subsection{Scalability}
Scalability depends on the linearity of the encoding used. For FrodoKEM \cite{naehrig2017frodokem} and SABER \cite{basso2020saber}, our countermeasure can be easily scaled to simulate higher orders of masking by simply generating new masks, encoding them, and adding them to the ciphertext. As a result, we end up with a higher order of masking for the message while still computing ``one share''.


For ML-KEM \cite{nist2023mlkem}, the scalability is less trivial as the encoding is non-linear. Adding $\nint*{\frac{q}{2}}$ several times in a row creates a growing bias as $2\nint*{\frac{q}{2}}\neq q$ due to the parity of $q$. The probability of having a failed bitflip resulting in an incorrect masking is low, but grows with this bias. One way of slowing the growth of this bias is to alternate between adding and subtracting $\nint*{\frac{q}{2}}$. This does not alleviate the growing complexity of the adaptations of Equations \ref{eq:corr1} and \ref{eq:corr2}. 

\subsection{Implementation recommendations}
\subsubsection{Mask generation} 
To generate $mask$, we use the same constraints as for the generic masking countermeasure. The share must be generated with a cryptographically secure RNG or a TRNG. In the \emph{pqm4} implementation \cite{pqm}, a hardware RNG is used to generate the message, we recommend using the same to generate $mask$.

\subsubsection{On-the-fly encoding}
A major drawback of post-quantum primitives is the size of the computations. Hence, we recommend implementing the function \Enco  (used at Line 3 of Algorithm \ref{alg:counterpke}) following an ``On-the-fly'' philosophy, i.e. encoding a coefficient of $mask$ and adding it to $\textbf{v}$ at the proper place before encoding the next coefficient. A similar approach should be preferred for other calls to the function \Enco, e.g. during the re-encapsulation for the encoding of $mm$. This design is inspired by the work of Bos et al. \cite{Bos_Bronchain_Custers_Renes_Verbakel_van_Vredendaal_2023}, where they use such an approach to optimise the memory usage of a FrodoKEM implementation on Cortex M4. As a consequence, we avoid storing encodings in memory, thus reducing memory stack consumption. This also benefits side-channel resistance, as attacking internal registers is much harder than attacking memory storage. The drawback is an increase in the complexity of implementation in time. In FrodoKEM \cite{naehrig2017frodokem}, ML-KEM \cite{nist2023mlkem} and SABER \cite{basso2020saber}, the function \Enco, when applied to the message and thus to $mm$ or $mask$, is a simple scalar multiplication.


\subsection{Vulnerability and compatibility with other countermeasures}
Although our countermeasure acts as a natural countermeasure to the use of ciphertext malleability \cite{ravi2021exploiting}, it is not a generic standalone countermeasure, as it does not affect the secret. Another issue resulting from the reuse of the primitive functions is that existing side-channel attacks against these functions, especially against the function \Enco \cite{10.1007/978-3-030-44223-1_11,9217595}, can circumvent the countermeasure. Thus, for a fully secured implementation, our method must be combined with other countermeasures.


As our countermeasure can be performed with minimal new code and, from a high-level perspective, reuses existing functions within LPR-based primitives, it is fully compatible with other secured implementations such as masking and/or shuffling. For a masked implementation, our countermeasure can be used to artificially increase the masking order of the message in the implementation.


We recommend combining our countermeasure with the generic shuffling countermeasure \cite{herbst2006aes,veyrat2012shuffling}, especially for calls to function \Deco and function \Enco. Shuffling is considered an effective countermeasure to protect the function \Enco \cite{ravi2024side,10.1007/978-3-030-44223-1_11}.


In a way, our countermeasure forces the attacker that was targeting the decoding function, for which generic countermeasures were proved ineffective, to target the encoding function where a generic countermeasure, shuffling, is known to be effective.

\begin{remark}
    Most of the state-of-the-art masked implementations of ML-KEM perform \Deco by first converting from arithmetic to boolean logic and then using boolean logic to compute the decoding. However, there are alternatives which compute decoding using arithmetic and then convert the result into boolean logic, such as \cite{cryptoeprint:2023/1220}. Our countermeasure can be used in this case to perform a cheaper arithmetic to boolean conversion by simply unmasking rather than converting. As we provide a first-order boolean masking with our countermeasure, unmasking in arithmetic does not expose the secret, but only one boolean share of it.
\end{remark}


\subsection{Discussion on the impact on other attacks}
Some attacks aim to recover the entire message using altered ciphertexts such as the one studied in Section \ref{sec:attack}. From a generic point of view, any attack that requires complete control over the ciphertext $v$ and using the same oracle $\mathcal{O}_{SCA}$ described in our paper is impacted.

\medskip

\noindent 
This applies to several attacks against ML-KEM categorised in \cite{ravi2024side}:
\begin{itemize}
    \item \emph{Binary Plaintext-Cheking} attacks \cite{ravi2020generic,shen2023find,qin2021systematic,buaetu2019misuse} use sparse ciphertexts to target single coefficients of the secret,
    \item \emph{Parallel Plaintext-Checking} attacks \cite{rajendran2023pushing,tanaka2023multiple} use sparse ciphertexts as well but target multiple coefficients of the secret at the same time.
\end{itemize}

\section{Conclusion}\label{sec:conclusion}
In this paper we recall an attack from Ravi et al. \cite{ravi2021exploiting} aimed at recovering the message in LPR-based cryptography primitives. We extended their work by detailing the attack on a post-quantum KEM, namely FrodoKEM, in Section \ref{sec:attack}. We also generalised their initial attack to any value of a parameter $d$ and discussed more optimal choices to reduce the number of traces required for a complete message recovery. We presented a novel countermeasure to this specific attack in Section \ref{sec:countermeasure}. The countermeasure is based on the same ciphertext malleability principle as the attack. We discussed the extension of the countermeasure to the entirety of the \KDec procedure, as well as its compatibility with other generic countermeasures and its scalability for several post-quantum primitives. As the countermeasure efficiency is local, we recommended its deployment in conjunction with a shuffling countermeasure.


The Ravi et al. \cite{ravi2021exploiting} attack principle is also used in several \emph{Deep Learning}-SCA (DL-SCA). It will be interesting to see how our countermeasure performs against a DL-SCA opponent. However, this will be implementation specific, as DL-SCA attacks tend to target specific \emph{determiner-leakages} due to the specific methods of computation used in some implementations. 

\paragraph{Acknowledgments}
The author thanks Yoan Rougeolle for his help with the experimental part of this work.

\noindent This work was realised thanks to the grant 2022156 from the Appel à projets 2022 thèses AID Cifre-Défense by the Agence de l'Innovation de Défense (AID), Ministère des Armées (French Ministry of Defence).


\bibliographystyle{fundam}
\bibliography{citations}
\end{document}